\theoremstyle{plain}
\newtheorem{theorem}{Theorem}[section]
\newtheorem{corollary}[theorem]{Corollary}
\newtheorem{proposition}[theorem]{Proposition}
\theoremstyle{definition}
\theoremstyle{remark}
\numberwithin{equation}{section}
\numberwithin{figure}{section}
\begin{document}


\title{3D current algebra and twisted K theory}

\author{Jouko Mickelsson}
\address{Department of Mathematics and Statistics, University of Helsinki}

\email{jouko@kth.se}

\maketitle
\begin{abstract} 
Equivariant twisted K theory classes on compact Lie groups $G$ can be realized as
families of Fredholm operators acting in a tensor product of a fermionic Fock
space and a representation space of a central extension of the loop algebra $LG$
using a supersymmetric Wess-Zumino-Witten model. The aim of the present article
is to extend the construction to higher loop algebras using an abelian extension
of a $3D$ current algebra. We have only partial success: Instead of true Fredholm
operators we have formal algebraic expressions in terms of the generators of the
current algebra and an infinite dimensional Clifford algebra. These give rise
to sesquilinear forms in a Hilbert bundle which transform in the expected way
 with respect to $3D$ gauge transformations but do not define true Hilbert
 space operators.

\end{abstract}

\section{Introduction}

I have warm memories from many occasions I had the opportunity to meet Ludwig
Faddeev, mostly in Helsinki and Stockholm. At one point we had very close
mathematical interests although we did not have direct co-operation on the
subject. This contribution to Faddeev's memorial volume is closely related
to that topic. I was working during 1983 on the problem how an additional
Chern-Simons term in the Yang-Mills action affects the current algebra commutation
relations through an abelian extension of the current algebra; the article came
out as preprint in the fall 1983, published later as \cite{Mick2}. I met Ludwig next
spring in Helsinki and I understood that he had been thinking about the same problem
and in fact he had just completed an article with Samson Shatasvili  \cite{F-Sh} 
with similar results. In this contribution the abelian extensions of current algebras which we found at that time  play a central role.

Twisted K theory on compact Lie groups has been discussed from several points of view
in the past. First, let us recall some basic definitions.  Let $PU(H)= U(H)/S^1$ be
the projective unitary group of a complex Hilbert space $H.$ A principal bundle over
a space $X$ with fiber $PU(H)$ is defined up to equivalence by a cohomology class
$\Omega \in \mathrm{H}^3(X,\mathbb{Z}),$ called the Dixmier-Douady class. 
The group $PU(H)$ acts by conjugation on the space of Fredholm operators in $H.$
This action defines a bundle over $X$ with fiber given by the space of Fredholm
operators. The homotopy classes of sections of this fiber bundle define the
twisted K theory $\mathrm{K}^*(X, \Omega).$ This cohomology is $\mathbb{Z}_2$
graded, the even part corresponds to all Fredholm operators on $H$ and the odd part
to self-adjoint Fredholm operators with essential positive and negative spectrum.

In the case when $X=G$ is a compact Lie group one is also interested in the 
equivariant twisted classes, $\mathrm{K}^*_G(G,\mathbb{Z}),$ where the $G$ action on itself
is given by conjugation. The nonequivariant case is solved for compact simple Lie
groups in \cite{Dou}. The equivariant classes can be explicitly constructed using
a supersymmetric conformal field theory model (Wess-Zumino-Witten model) in
$1+1$ space time dimensions \cite{Mick1}. More detailed discussion can be found
in \cite{FHT} including the proof of completeness of the construction.  

The construction in the equivariant case is related to the fact that the compact Lie
group is actually the moduli space of gauge connections on the unit circle $S^1,$
$G =\mathcal{A}/\mathcal{G}_0$ where $\mathcal{A}$ is the space of smooth gauge
connections over $S^1$ and $\mathcal{G}_0$ denotes the group of based gauge
transformations. Thus it is natural to ask if the construction can be generalized
to higher dimensions, to the case of gauge potentials on a vector bundle over
a compact spin manifold. In this article we shall take the modest step and study
the case of the 3-dimensional torus $T^3.$ This case is convenient since we can 
work algebraically with functions which are finite linear combinations of the
Fourier modes. But let us first briefly recall the construction in the case of $X=G$ 
based on loop group representations.

Fix a unitary highest weight representation of the central extension $\widehat{LG}$
of the smooth loop group $LG$ in a Hilbert space $V.$ Use the standard 2-cocycle $k c$ 
with
$$c(X,Y) = \frac{i}{2\pi} \int_{S^1} <X,dY>$$
where  $<\cdot, \cdot>$ is an invariant bilinear form on the Lie algebra $\mathfrak{g}$
such that the level $k$ is a nonnegative integer in an irreducible highest weight
representation of the loop group. In addition one needs an irreducible representation
of the canonical anticommutation relations algebra
$$\psi_a^n \psi_b^m + \psi_b^m \psi_a^n = 2 \delta_{ab} \delta_{n,-m} \bold{1}$$
in a Fock space $\mathcal{F}.$ Here $n,m\in\mathbb{Z}$ and $a,b=1,2,\dots \text{dim\,}
G$ relate to an orthonormal basis of $\mathfrak{g}$ with respect to $<\cdot,\cdot>.$
The representation is characterized by the existence of a vacuum vector with the
property $\psi_a^n |0> =0 $ for $n<0.$ In addition, we have the hermiticity relation
$(\psi^n_a)^* = \psi^{-n}_a.$

Following the notation in \cite{Mick1}  we define the skew symmetric operator
$$Q= \sum \psi_a^n (T_a^{-n} +\frac{1}{3} K_a^{-n})$$
acting in the tensor product $V\otimes \mathcal{F}.$ Here $S^n_a= T^n_a\otimes \bold{1}
+ \bold{1} \otimes K_a^n$ are the generators of the loop algebra acting in the
tensor product. The commutators of the generators $K_a^n$ are determined by
the central extension of level $k=\kappa,$ where $\kappa$ is the dual Coxeter
number of the simple Lie algebra $\mathfrak{g},$

$$[K_a^n, K_b^m] = \sum_c\lambda_{ab}^c K_c^{n+m} +  \kappa m\delta_{ab} \delta_{n,-m}$$
where the numbers $\lambda_{ab}^c$ are the structure constants of the Lie algebra
$\mathfrak{g}.$ 
In a unitary representation of the loop algebra $(T^n_a)^* = -T^{-n}_a$ and
$(K^n_a)^* = -K^{-n}_a.$ 

Next define the family of operators $Q_A = Q + \tilde{k} \sum \psi_a^n A_a^{-n}$ where
the $A_a^n$'s are the Fourier components of the gauge potential $A$ in the Lie
algebra basis labelled by $a$ and $\tilde{k} = k+\kappa.$ One then shows that the family of operators $Q_A$
transforms covariantly under the central extension $\widehat{LG},$
$$ \hat{g}^{-1} Q_A \hat{g} = Q_{A^g}$$
with $A^g = g^{-1} A g +g^{-1}dg$ for $g\in LG.$ Because of the central extension
this family cannot be pushed down to a family of operators parametrized by $\mathcal{A}/
LG$ but it defines a twisted equivariant K theory element on $G.$ The Dixmier-Douady
class of the twist is given by $\tilde{k}$ times the generator 
$\omega\in \mathrm{H}^3(G,\mathbb{Z})\simeq \mathbb{Z}.$

\section{Gauge current algebra in 3 dimension}

Let $M$ be a compact connected spin 3-manifold, $G$ a compact simple Lie
group and $\mathfrak{g}$ its Lie algebra. Let $E$ be a trivial vector bundle over
$M$ with fiber the tensor product $\mathbb{C}^2 \otimes \mathbb{C}^N$ where
the first factor is the space of Weyl spinors and the group $G$ is unitarily
represented in the second factor. The Weyl-Dirac operator $D_A$ coupled to
a gauge potential $A,$ with values in $\mathfrak{g},$ acts in a dense domain
of the Hilbert space $H= L_2(M,E)$ of square integrable sections of the bundle $E.$

For a real number $\lambda$ and a given vector potential $A$ one has a polarization
$$H = H_+(A,\lambda) \oplus H_-(A,\lambda)$$
of the Hilbert space $H$ using the spectral projection associated to $D_A$ with 
a spectral cut at $\lambda.$ This polarization defines a representation of the
canonical anticommutation relations algebra (CAR) in a fermionic Fock space with
a vacuum $|A,\lambda>.$ The canonical anticommutation relations are generated
by elements $a^*(v)$ (linear in $v\in H$) and $a(v)$ (antilinear in $v$) with
relations
$$ a^*(u) a(v) + a(v) a^*(u) = 2 <v,u>\cdot\bold{1} $$
where $<\cdot, \cdot>$ is the $L_2$ inner product in $H.$ The vacuum is a vector
in the Fock space $\mathcal H$ such that
$$ a(v)|A,\lambda> = 0 = a^*(u) |A,\lambda> \text{ for $v\in H_+(A,\lambda)$ and $u\in H_-(A,\lambda)$}.$$

However, there is a problem related to the spectral flow as $A$ varies which
obstructs a construction of the Fock vacua as continuous functions of the argument
$A.$ To circumvent this problem one has to modify the construction by a certain
family of local complex line bundles  over open sets in the space $\mathcal{A}$ 
of smooth vector potentials. The group $\mathcal{G}$ of smooth gauge transformations
does not act in these line bundles; instead, one has to construct an abelian
extension $\hat{\mathcal{G}}$ which acts in the line bundles. The Lie
algebra of the extension is defined by the 2-cocycle
$$c(A; X, Y) = \frac{1}{24\pi^2} \int_M \text{tr}\, A [dX, dY] $$
where $X,Y: M \to \mathfrak{g}$ and the trace is evaluated in the representation in
$\mathbb{C}^N.$ There are several ways to derive this formula but the index theory
derivation in \cite{CMM} describes best the topological and geometric aspects
related to the construction of gerbes over the moduli space of gauge potentials.

As a consequence of the line bundle modifications is that the gauge group $\mathcal G$
acts through an extension $\hat{\mathcal G},$ infinitesimally determined
by the cocycle $c,$ on a vector bundle  $V$ over $\mathcal A$ with model
fiber $\mathcal H.$

Since $c$ is a function of $A$ the 2-cocycle property reads as
\begin{align*}  & c(A; X, [Y,Z]) + c(A; Y, [Z.X]) + c(A; Z,[X,Y]) \\
& +L_X c(A; Y,Z)+L_Y c(A; Z,X) + L_Z c(A, X,Y) =0 \end{align*} 
where $L_X$ denotes the Lie derivative acting on functions of $A$ through
the infinitesimal gauge transformations $L_X A = [A,X] +dX.$ Alternatively, the
cocycle $c$ comes from a central extension of the gauge groupoid $(\mathcal{A},
\mathcal{G})$ with sources and targets in $\mathcal{A}$ and arrows $A\to A^g=
g^{-1}Ag + g^{-1} dg.$ 

\bf Remark \rm  In a real representation of $\mathfrak{g}$ the cocycle $c$ vanishes
identically. The reason is that in a real unitary representation the Lie algebra
elements are antisymmetric and therefore $\text{tr}\, A(BC+CB) = \text{tr}\, (A(BC+CB))^t
= - \text{tr}\, (BC +CB) A = - \text{tr}\, A(BC+CB) =0.$ In particular, this is always
the case for the \bf adjoint representation \rm of $\mathfrak{g}.$

The above remark is important in the next section. We shall consider Fock representations
of the current algebra in the tensor product of two Fock spaces. The first one comes
from the quantization of fermions in the adjoint representation and the second in
an arbitrary complex representation. It follows that the 2-cocycle $c$ has
a contribution only from the complex representation.

\section{Family of formal supersymmetric hamiltonians}

Using the symmetric bilinear form
$$<X,Y> = \frac{1}{(2\pi)^3} \int \text{tr}\, X(z) Y(z) dz$$
on the algebra of smooth functions $X,Y: T^3 \to \mathfrak{g}$ we associate to the Fourier modes $T_a^n$ ($n\in\Bbb Z^3$ and $a=1,2,\dots N= \text{dim}\, \mathfrak{g}$)
the dual basis $\psi_a^n$ and
define the formal series
\begin{equation}
Q= \sum_{n\in \Bbb Z^3}\sum_{a=1}^N\big(\psi^{n}_a T^{-n}_a +  \frac13\psi^{n}_a K^{-n}_a
\big)
\label{freeQ}
\end{equation}
The vectors $\psi_a^{n}$ are interpreted as elements in an infinite dimensional Clifford algebra
with anticommutation relations 
$$\psi_a^{n} \psi_b^m + \psi_b^m \psi_a^n = 2 \delta_{ab} \delta_{n, -m}.$$

Here the $K^{n}_a$'s are  derivations of the CAR algebra defined as
\begin{equation}
[K^{n}_a, \psi^{m}_b] = \sum_c \lambda_{ab}^c \psi^{n+m}_c.\label{Kpsi}
\end{equation}
The commutators of the derivations and are then given as
$$[K^n_a, K^m_b] = \sum_c \lambda_{ab}^c K^{n+m}_c.$$
As infinite formal sums,
$$K^n_a = - \sum_{b,c,m} \frac14 \lambda_{ab}^c  \psi^{m}_b \psi^{n-m}_c.$$
More concretely, in the case of Weyl fermions in the adjoint representation of
the gauge group $G,$ after quantization in the bundle of fermionic Fock spaces $\mathcal{F}$
over $\mathcal{A}$ the current algebra generated by the $K^n_a$'s acts without the abelian extension ($c=0$)
by the remark in the end of the previous Section. For this reason it is actually
convenient to deal with massive 4-component Dirac fermions instead of 2-component
Weyl fermions.  This simplifies some things later on. For massive fermions
we have a mass gap in the spectrum of the Dirac hamiltonians and we can define the vacuum with respect to the polarization defined by the spectral cut at zero for all
Dirac hamiltonians coupled to vector potentials. The fermionic modes $\psi_a^n$
can stand for any of the four components of the Dirac spinor. For notational simplicity we shall suppress the spinor label $\alpha=1,2,3,4$ in the following
discussion. 

The current algebra extension defined by the cocycle $c$ comes entirely from the action
of the $T^n_a$'s on the sections of the bundle $V$ over $\mathcal{A}.$

Since the gauge group acts in the fermionic  Fock spaces $\mathcal F$ without the extension term,
the Dixmier-Douady class of the projective Fock bundle  vanishes and thus can be
pushed to a true Hilbert bundle over the moduli space $\mathcal{A}/\mathcal{G}_0,$
the sections of this bundle corresponding to gauge invariant sections of the
pull-back bundle over  $\mathcal{A}.$ In particular, we are free to define a gauge
covariant vacuum section $A\mapsto |A>$ with $ g |A>=|g\cdot A>.$ For an arbitrary polynomial $P$ in the
generators $\psi^n_a$ the action of a gauge transformation on the state $P|A>$ is then given by the canonical action on the generators of the Clifford algebra (infinitesimally
by the formula \ref{Kpsi}) and by $g|A> ) =|g\cdot  A>.$

As it stands, the infinite sum in \ref{freeQ} is ill-defined as an operator but $Q$ defines 
a derivation of the CAR algebra and the gauge current algebra as
\begin{align}
[Q, \psi^{m}_b ]_+  & = 2(T^m_b +  K^m_b)  \equiv 2 S^m_b  \label{Q,psi}\\
[Q, S^m_b] & = \sum_{n,a} c(A; T^n_a, T^{m}_b) \psi^{-n}_a \label{Q,S}
\end{align}

For a smooth potential $A$ the infinite sum in the second equation is actually
converging in the $L^2$ norm in the CAR algebra generated by the $\psi^n_a$'s
since the Fourier coefficients of a smooth function form a rapidly decreasing 
sequence.

{\bf Remark} An $N\times N$ matrix $X$ defines an element $\psi(X)$ in the dual
$\mathfrak{g}^*$ by $<\psi(X), z> = \text{tr}\, Xz$ where $z$ stands for the matrix representing 
$z\in\mathfrak{g}$ in the representation space $\mathbf{C}^N.$ In the same vain,
a matrix valued 3-form $\omega$ on $T^3$ defines an element in the dual of the
current algebra $Map(T^3, \mathfrak{g})$ by $$<\psi(\omega), X> = \int \text{tr}\,\omega X.$$
It follows that for a linear combination $X= \sum X_m^a S^m_a$ we have
$$[Q, X] = \psi(dX\wedge dA + dA\wedge dX).$$ 

From \eqref{Q,psi} \eqref{Q,S} one can reduce that the commutators with the formal square $Q^2$
are given as
\begin{align}
[Q^2, \psi^n_a] & = 2 \psi^m_b c(A; T^{-m}_b, T^{n}_a) \label{Q2,psi} \\
[Q^2, S^n_a] & = 2 S^m_b c(A; T^{-m}_b, T^n_a) - \psi^m_b \psi^p_c \mathcal{L}^{-p}_c 
c(A; T^{-m}_b, T^n_a). \label{Q2,S}
\end{align}

We have used the fact that
$$\sum_b \mathcal{L}^p_b c(A; T^{-p}_b, T^n_a) = 0$$
which follows from 
$$\mathcal{L}_Z c(A; X,Y) = \int \text{tr}\, [A,Z] (dX dY -dY dX)$$
and the algebraic relation
$$\sum_b \text{tr}\, [A, T_b] (T_b T_a + T_a T_b)
= \sum_b \text{tr}\, [T_a, A]  (T_b)^2 = \sum_b \text{tr}\, T_a[A, (T_b)^2] =0$$
since the element $\sum_b (T_b)^2$ is the Casimir invariant in the semi simple Lie
algebra $\mathfrak{g}.$  

The formal square of $Q$ can be written as 
$$Q^2 = \sum T^n_a T^{-n}_a  +  \frac12 \sum  c(A; T^n_a, T^m_b)  \psi^n_a \psi^m_b$$
and one check by direct computation from the commutation relations that
the commutators of this formal series match the commutators \eqref{Q2,psi} and \eqref{Q2,S}.
This expression can be compared with the corresponding formula \cite{Mick1} in the loop
group case. The important difference is that the contribution involving the dual
Coxeter number is missing. The reason is that the algebra of the $K^n_a$'s does
not contain the anomaly proportional to the dual Coxeter number. 

From  \eqref{Q,psi} follows immediately 
\begin{theorem} \label{gauge}  Let $B = B(A)$ be a matrix value 3-form
on $T^3$ defined as $$B(A) = \omega \wedge dA + dA \wedge\omega + \alpha\wedge A \wedge\beta + \gamma \wedge A
+ A \wedge \phi + \theta$$ where the parameters $\omega, \alpha, \beta$ are matrix valued 1-forms, $\gamma, \phi$ are matrix valued 2-forms and $\theta$ is a matrix valued 3-form.
Thus $B(A)$ is an affine function in the variable $A.$ 
Define the formal  sum  $Q_B = Q + \psi(  B(A)). $ Then for $X= \sum X_m^a S^m_a$
$$[Q_B , X] = \psi(B'(A))$$
where $B'(A)$ is again a matrix valued 3-form with new parameters $(\omega',\alpha',
\beta', \gamma', \phi', \theta').$ In particular,
$$\omega' = [\omega, X] +  dX.$$

 \end{theorem}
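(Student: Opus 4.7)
The plan is to expand $[Q_B, X] = [Q, X] + [\psi(B(A)), X]$, invoke the preceding Remark on the first piece, and evaluate the second by splitting $X = T(X) + K(X)$ according to its action on the two tensor factors of $V \otimes \mathcal{F}$. The $K$-part acts on the Clifford generators through \eqref{Kpsi}, which on $\psi(B(A))$ reduces to the matrix commutator $[X, B(A)]$ applied termwise to the matrix-valued 3-form $B(A)$; the $T$-part acts on the $A$-dependence via the Lie derivative determined by $L_X A = [A, X] + dX$, producing $\psi(L_X B(A))$.

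Combining, I obtain
\[ B'(A) = dX \wedge dA + dA \wedge dX - [X, B(A)] - L_X B(A). \]
The body of the argument is a term-by-term verification that this $B'(A)$ has the prescribed affine-in-$A$ shape, together with extraction of $\omega'$. The tools required are the Leibniz identity $d[A, X] = [dA, X] - A \wedge dX - dX \wedge A$ together with the ``matrix-through-wedge'' identities $\omega \wedge (X \cdot dA) = (\omega X) \wedge dA$ and $\omega \wedge (dA \cdot X) = (\omega \wedge dA) X$, which reflect that $X$ is a $0$-form matrix.

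The key observation is that $-[X, \omega \wedge dA]$ and $-L_X(\omega \wedge dA)$ each produce ``orphan'' pieces of shape $(\omega \wedge dA) X$ and $(\omega X) \wedge dA$ that a priori lie outside the span of the six basis shapes of $B(A)$. These orphans cancel in pairs, leaving
\[ -[X, \omega \wedge dA] - L_X(\omega \wedge dA) = [\omega, X] \wedge dA + \omega \wedge A \wedge dX + \omega \wedge dX \wedge A, \]
whose first summand feeds $\omega'$ with $[\omega, X]$ and whose remaining two summands fit the $\alpha' \wedge A \wedge \beta'$ and $\gamma' \wedge A$ slots. An analogous computation for $dA \wedge \omega$ produces $dA \wedge [\omega, X]$ plus further contributions to the other slots; the remaining four terms $\alpha \wedge A \wedge \beta$, $\gamma \wedge A$, $A \wedge \phi$, $\theta$ of $B(A)$ are handled identically and contribute only to $\alpha', \beta', \gamma', \phi', \theta'$. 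Collecting, $\omega'$ receives $dX$ from $[Q, X]$ and $[\omega, X]$ from the two $dA$-involving terms, giving $\omega' = [\omega, X] + dX$.

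The main obstacle is precisely this bookkeeping: verifying that the orphan pieces produced separately by $-[X, \cdot]$ and by $-L_X$ cancel in pairs, so that $B'(A)$ genuinely lies in the span of the six shapes. Once those cancellations are confirmed, the identification of $\omega'$ follows by inspection of the $(\text{1-form}) \wedge dA$ and $dA \wedge (\text{1-form})$ summands that survive.
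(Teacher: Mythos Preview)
Your proof is correct and is precisely the computation the paper has in mind: the paper offers no argument beyond the line ``From \eqref{Q,psi} follows immediately,'' relying on the Remark $[Q,X]=\psi(dX\wedge dA+dA\wedge dX)$ together with the action of $S^m_b=T^m_b+K^m_b$ on $\psi(B(A))$, and you have filled in exactly those details. Your identification and cancellation of the ``orphan'' terms $\omega\wedge(dA\cdot X)$ and $(\omega\wedge dA)X$ is the only nontrivial bookkeeping, and it checks out, yielding $\omega'=[\omega,X]+dX$ as stated.
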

Thus $\omega$ transforms like a matrix valued vector potential under the infinitesimal
gauge transformations $X.$ In particular, the action of based gauge transformations
is free on the parameter $\omega$ and thus also on the space of functions
$A\mapsto B(A),$

\begin{corollary} The moduli space $\mathcal{B}/\mathcal{G}_0$ of the functions
$B(A)$ modulo the group $\mathcal{G}_0$ of based gauge transformations is
the classifying space for the group $\mathcal{G}_0$ and thus homotopy equivalent
to the space $\mathcal{A}/\mathcal{G}_0$ of gauge equivalence classes of vector
potentials $A.$ 
\end{corollary}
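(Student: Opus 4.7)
The plan is to exhibit the space $\mathcal{B}$ as a $\mathcal{G}_0$-equivariant vector bundle over $\mathcal{A}$ whose base is the space of connections and whose fiber is the space of tensorial parameters, and then to contract the fiber.

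First, I parameterize $\mathcal{B}$ by the six objects $(\omega,\alpha,\beta,\gamma,\phi,\theta)$ that appear in the definition of $B(A)$, and consider the projection $\pi\colon \mathcal{B}\to\mathcal{A}$ sending $B(A)$ to $\omega$. Theorem \ref{gauge} tells me that the infinitesimal $\mathcal{G}_0$-action on the $\omega$-coordinate is exactly $\omega\mapsto[\omega,X]+dX$, i.e.\ the canonical action on connections; hence the induced global action identifies this coordinate with an ordinary (matrix-valued) gauge potential and $\pi$ becomes $\mathcal{G}_0$-equivariant. The remaining parameters $(\alpha,\beta,\gamma,\phi,\theta)$ do not appear in $\omega'$ and enter $B'(A)$ only through terms that are linear in them (this should follow from the affine form of $B(A)$ and the fact that $L_X$ acts linearly on each type of tensor), so the fiber of $\pi$ is a vector space on which $\mathcal{G}_0$ acts by a linear representation.

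Second, I use the linear scaling $\Phi_t(\omega,\alpha,\beta,\gamma,\phi,\theta)=(\omega,t\alpha,t\beta,t\gamma,t\phi,t\theta)$ for $t\in[0,1]$. Because the $\mathcal{G}_0$-action on the fiber is linear, each $\Phi_t$ is $\mathcal{G}_0$-equivariant, and $\Phi_t$ provides a strong deformation retraction of $\mathcal{B}$ onto the zero section, which is $\mathcal{A}$. Passing to quotients gives a deformation retraction $\mathcal{B}/\mathcal{G}_0\simeq\mathcal{A}/\mathcal{G}_0$.

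Finally, I recall the standard fact that $\mathcal{G}_0$ acts freely on $\mathcal{A}$: a gauge transformation stabilizing a connection is covariantly constant, hence determined by its value at the basepoint, so the based-gauge condition forces it to be the identity. Since $\mathcal{A}$ is affine and therefore contractible, $\mathcal{A}\to\mathcal{A}/\mathcal{G}_0$ realizes $\mathcal{A}/\mathcal{G}_0$ as a model for $B\mathcal{G}_0$, and the previous paragraph transports this to $\mathcal{B}/\mathcal{G}_0$.

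The main obstacle is the second step: I must confirm that the induced gauge transformations of the auxiliary parameters $(\alpha,\beta,\gamma,\phi,\theta)$ are linear in these parameters, so that radial contraction in the fiber is genuinely equivariant. Theorem \ref{gauge} records only the law for $\omega$; one has to unpack $[Q_B,X]=\psi(B'(A))$ coefficient by coefficient in $A$, verify that the inhomogeneous term $dX$ contributes only to $\omega'$, and check that each of the other parameters transforms by an adjoint-type linear action inherited from the $L_X$ Leibniz rule. Once that bookkeeping is done, the rest of the argument is formal.
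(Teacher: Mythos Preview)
Your strategy is more elaborate than what the paper actually does. The paper's entire argument is the sentence immediately preceding the corollary: since $\omega'=[\omega,X]+dX$, the $\omega$-coordinate alone already transforms as a connection, so $\mathcal{G}_0$ acts freely on $\mathcal{B}$; and $\mathcal{B}$ is an affine (hence contractible) space, so $\mathcal{B}\to\mathcal{B}/\mathcal{G}_0$ is a universal $\mathcal{G}_0$-bundle and the quotient is a model for $B\mathcal{G}_0$. The homotopy equivalence with $\mathcal{A}/\mathcal{G}_0$ then follows because the latter is another such model. Nothing about the transformation law of $(\alpha,\beta,\gamma,\phi,\theta)$ is ever used.

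Your route via an equivariant fibrewise retraction would have the virtue of producing an explicit map $\mathcal{B}/\mathcal{G}_0\to\mathcal{A}/\mathcal{G}_0$, but the step you label ``the main obstacle'' is not mere bookkeeping. When you unpack $[Q_B,X]$, the Lie derivative $L_XA=[A,X]+dX$ acting on the term $\alpha\wedge A\wedge\beta$ produces an $A$-independent piece $\alpha\wedge dX\wedge\beta$, which lands in $\theta'$. This contribution is bilinear in $(\alpha,\beta)$, so the induced gauge action on the fibre coordinates $(\alpha,\beta,\gamma,\phi,\theta)$ is genuinely \emph{not} linear, and your scaling $\Phi_t(\omega,\alpha,\beta,\gamma,\phi,\theta)=(\omega,t\alpha,t\beta,t\gamma,t\phi,t\theta)$ fails to be $\mathcal{G}_0$-equivariant (the $\theta$-slot receives $t^2\alpha\wedge dX\wedge\beta$ on one side and $t\,\alpha\wedge dX\wedge\beta$ on the other). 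The paper's direct ``free action on a contractible total space'' argument sidesteps this entirely, since it needs only that $\mathcal{B}$ is contractible as a space, not that the contraction be compatible with the group action.
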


The forms $Q_B$ are only formal algebraic expressions but they can be used to
define sesquilinear forms in dense domains of the Fock spaces. Let $\xi$ be an arbitrary smooth section of $V$ and $\eta(A) = |A>$ the vacuum section of $\mathcal F.$
The domain $D$ for $Q_B$ is then defined to consists of sections of $V \otimes \mathcal F$
of the form $\xi \otimes p(\psi)\eta$ where $p$ is a polynomial in the generators
$\psi^n_a$ of the Clifford algebra. 
\begin{proposition}
With the notation above  $<\xi'\otimes p'(\psi)\eta,
Q_B (\xi\otimes p(\psi)\eta>$ is well-defined in the fiberwise inner product for the
sections.  
\end{proposition}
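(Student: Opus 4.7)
The plan is to show that, when the formal mode expansion of $Q_B$ is inserted into the pairing $\langle \xi'\otimes p'(\psi)\eta,\, Q_B(\xi\otimes p(\psi)\eta)\rangle$ at a fixed $A\in\mathcal{A}$ and the Fock inner product is computed, all but finitely many terms vanish, leaving a finite sum of well-defined scalars.

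First I would decompose
\begin{equation*}
Q_B(\xi\otimes p(\psi)\eta) = \sum_{n,a} T_a^{-n}\xi\otimes\psi_a^n p(\psi)\eta + \tfrac{1}{3}\sum_{n,a}\xi\otimes\psi_a^n K_a^{-n} p(\psi)\eta + \sum_{n,a} b_a^{-n}(A)\,\xi\otimes\psi_a^n p(\psi)\eta,
\end{equation*}
where $\psi(B(A))=\sum b_a^{-n}(A)\psi_a^n$ and the coefficients $b_a^n(A)$ are rapidly decreasing in $n$ because $B(A)$ is a smooth matrix-valued 3-form on $T^3$. Taking the pairing against $\xi'\otimes p'(\psi)\eta$ factors each summand into a product of a $V$-inner product and a Fock-space scalar.

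Next I would exploit that $p$ and $p'$ each involve only finitely many Clifford generators; let $I,I'$ denote their finite index sets. Since the vacuum $\eta=|A\rangle$ is annihilated by half of the $\psi_a^n$ (those in the negative part of the polarization defined by the spectral cut of $D_A$), the Fock pairing $\langle p'(\psi)\eta,\,\psi_a^n p(\psi)\eta\rangle$ reduces by Wick contractions to a sum of Kronecker deltas in the mode indices, which is nonzero only when $(a,n)$ pairs with a mode in $I\cup I'$. This restricts $(a,n)$ to a finite set $J$. The $K$-term is handled analogously: by \eqref{Kpsi}, the derivation $K_a^{-n}$ turns $p(\psi)$ into a polynomial of the same degree, supported on modes obtained from $I$ by shifts involving $\pm n$, so the pairing $\langle p'(\psi)\eta,\,\psi_a^n K_a^{-n} p(\psi)\eta\rangle$ again vanishes outside a finite set of $(a,n)$.

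On the surviving finite index set, each summand is a Fock scalar multiplied by $\langle\xi',T_a^{-n}\xi\rangle$ or $\langle\xi',\xi\rangle$, and possibly by a coefficient $b_a^{-n}(A)$. Each of these is a well-defined complex number, since $\xi,\xi'$ are smooth sections and the $T_a^{-n}$ act on smooth sections of $V$ through the abelian extension $\hat{\mathcal{G}}$. The hard part will be the careful bookkeeping of which Wick contractions survive, particularly in the $K$-term where the derivation shifts mode indices and one must argue that these shifts cannot produce infinitely many matching patterns with the finite mode content of $p'$. Once this finiteness is in hand, the sesquilinear form is automatically well-defined at each $A$, being a finite sum of well-defined scalars.
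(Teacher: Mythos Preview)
Your argument has a genuine gap at $A\neq 0$. You assert that the vacuum $\eta=|A\rangle$ is annihilated by ``half of the $\psi_a^n$, those in the negative part of the polarization defined by the spectral cut of $D_A$.'' But the $\psi_a^n$ are the \emph{fixed} Fourier modes on $T^3$; they are adapted to the spectral decomposition of the free operator $D_0$, not of $D_A$. For $A\neq 0$ the polarization defining $|A\rangle$ is built from the eigenvectors of $D_A$, which are in general infinite linear combinations of the Fourier modes. Consequently the two-point functions $\langle \eta,\psi_a^n\psi_b^m\eta\rangle$ are \emph{not} Kronecker deltas in $(n,m)$, your Wick-contraction bookkeeping does not collapse to a finite index set $J$, and the finiteness claim fails as stated.

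The paper's proof separates precisely these two regimes. At $A=0$ it argues essentially as you do, using that the free vacuum is an eigenvector of the quantized momenta $P_k$, so Fourier modes outside the support of $p,p'$ contribute nothing. For $A\neq 0$ it introduces a pseudodifferential intertwiner $T_A = 1 + t_{-1}+\cdots$ between the free and interacting sign operators $\epsilon_0,\epsilon_A$; conjugating the free momenta by $T_A$ produces operators $P_k$ for which $|A\rangle$ is still an eigenvector, but now $[P_k,\psi_a^n]=n_k\psi_a^n + O(|n|^{-1})$. Hence the matrix element of $\psi_a^n$ between polynomial states is not exactly zero for large $|n|$ but is $O(|n|^{-2})$, and it is this decay, rather than exact vanishing, that makes the sum $\sum_n \psi_a^n T_a^{-n}$ give a finite fiberwise expectation. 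A smaller point: for the $K$-term the paper uses directly that the gauge-invariant vacuum satisfies $K_a^n\eta=0$, which together with \eqref{Kpsi} handles that contribution without the mode-shift bookkeeping you outline.
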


\begin{proof}
The vacuum section $\eta$ is invariant under gauge transformations so
$K^n_a \eta =0$ and thus also $\sum \psi^n_a K^{-n}_a \eta =0$ and so
$Q (\xi\otimes \eta) = \sum T_a^n \xi \otimes \psi_a^{-n}\eta.$ Using the algebra
relations \eqref{Q,psi}, \eqref{Q,S} one sees that it is sufficient to show
that the inner products $<\xi'\otimes p'\eta, T_a^n\xi \otimes \psi_a^{-n} p\eta>$
exist fiberwise. First computing this at the fiber over $A=0$ the claim follows
from the fact that for large enough absolute value of $n$ the inner product
of $\psi_a^n$ between finite polynomials in the $\psi$'s vanish. This is a consequence
of the fact that the free vacuum is an eigenvector for the quantized momentum
vectors $P_k$ with eigenvalue zero and thus the vacuum expectation value
of a polynomial not commuting with the $P_k$'s is equal to zero.

When $A\neq 0$ the above argument needs a modification.

To be more precise on the definition of the momentum operators:  These are not the 
free momentum operators but momenta defined with respect to the vacuum $|A>.$ That is,
choosing a unitary operator $T_A$ in the 1-particle Hilbert space such that
$T_A^{-1} \epsilon_0 T_A = \epsilon_A$ where  $\epsilon_A = D_A/|D_A|$ then
$p_k = T_A^{-1} i\frac{\partial}{\partial x_k} T_A$ in the 1-particle space and 
$P_k$ is the second quantized operator corresponding to $p_k.$ 
Here $T_A$ is a pseudodifferential operator with an asymptotic expansion
$T_A = 1 + t_{-1} + t_{-2} + \cdots$ in inverse powers  of momenta; the first 
nontrivial term $t_{-1}$ is linear in $A,$ \cite{JM94}. It follows that
$p_k = i\frac{\partial}{\partial x_k} +$ terms of order $-1$ in momenta.
Now the commutator $[P_k, \psi^n_a]$ is note any more equal to $n_k\psi^n_a$ 
but there is correction which is proportional to the inverse of $|n|.$ It follows
that the expectation value of $\psi^n_a$ between polynomial states in the Clifford
algebra is not exactly zero for large $|n|$ but there is a correction proportional
to $1/|n|^2$ (since the extra terms in $P_k$ have relative magnitude of order $-2$).
But the factor $1/|n|^2$ guarantees that the sum over $n$ in $\sum \psi^n_a T^{-n}_a$
gives a finite fiberwise expectation value between sections of the form $\xi\otimes
p(\psi)\eta.$ The second term $\psi^n_a K^{-n}$ gives a finite contribution
by ${K^n}_a \eta=0$ and the relation 
\eqref{Q,psi}. The interaction term $\psi(B(A))$
is in fact a fiberwise well-defined operator since it is a linear combination
of the $\psi^n_a$'s with $L_2$ coefficients coming from a smooth form $B(A).$ 

\end{proof}

\end{document}